%
%
\documentclass[aps,prl,preprint,groupedaddress,showpacs]{revtex4-1}

\usepackage{amsmath}
\usepackage{amssymb}
\usepackage{graphicx} 
\usepackage{amsthm}

\newcommand{\R}{\mathbb R}

\newcommand{\Z}{\mathbb Z}
\newcommand{\X}{\mathbb{X}}

\newtheorem{theorem}{Theorem}
\newtheorem{definition}{Definition}
\newtheorem{lema}{Lemma}

\begin{document}


\title{Modelling Opinion Dynamics:  Theoretical analysis and continuous approximation}


\author{Juan Pablo Pinasco}
\affiliation{Departamento de Matem\'atica, Facultad de Ciencias Exactas y Naturales, Universidad de Buenos Aires  and  IMAS UBA-CONICET, Buenos Aires, Argentina.}
\author{Viktoriya Semeshenko}
\affiliation{Instituto Interdisciplinario de Econom\'ia Pol\'itica (IIEP-BAIRES), UBA,
CONICET, FCE, CABA, Argentina.}
\author{Pablo Balenzuela}
\affiliation{Departamento de F\'isica, Facultad de Ciencias Exactas y Naturales, Universidad de Buenos Aires and IFIBA, CONICET.
Buenos Aires, Argentina.}

\date{\today}

\begin{abstract}

Frequently we revise our first opinions after talking over with other individuals because we
get convinced. Argumentation is a verbal and social process aimed at convincing. It includes
conversation and persuasion.  In this case, the agreement is reached because the new arguments are
incorporated. In this paper we deal with a simple model of opinion formation with such persuasion dynamics, and  we find the exact analytical solutions for both, long and short range interactions. A novel theoretical approach has been used in order to solve the master equations of the model with non-local kernels.
Simulation results demonstrate an excellent agreement with results obtained by the theoretical estimation.

\end{abstract}

\pacs{87.23.Ge, 89.65.Ef, 89.75.Fb, 05.65.+b,02.30.Jr}

\maketitle


In group discussions individuals exchange arguments over a specific subject of conversation, and then selectively either incorporate what they have discovered or at least learn to understand one another better. That is to say, individuals may want to change their own opinions about an issue in order to get closer to or farther from others in the group. These human interactions give rise to the formation of different kinds of opinions in a society. At the end of the discussion the group will be characterized either by a so called opinion consensus or coexistence of opinions.

There are many approaches for modelling the dynamics of human opinions, and they all differ in their focus and complexity (\cite{Voter1,Voter2,WD2000,Masetal2013,LaRoccaetal2014}). G. Deffuant et al. \cite{WD2000} proposed a model of continuos opinions (D-W) based mainly on two ingredients: compromise and bounded confidence ($d$). The implicit conjecture is that individual opinion's shifts are due to interpersonal comparison process. Given a population of $N$ agents with initial opinions randomly equally distributed in some finite interval, the authors derived the analytical solution in the limit of local interactions ($d \rightarrow 0$). The compromise hypothesis assumes people tend to agree when they interact. Under the bounded confidence it is possible only in the case when their opinions are close enough. The last assumption has some mathematical reason beneath:  the master equation could be derived for the case of local interactions only (i.e., when two opinions are close enough, $d \rightarrow 0$). In
the present paper we will show that it can be extended even for  of long-range interactions.

The D-W model explains group induced shifts in individual choice in terms of interpersonal comparison process. By comparing himself with others in a group a member finds out that his position is uncomfortably discrepant, e.g., he is overly cautious or overly risky. Knowledge of this discrepancy presumably is necessary and sufficient to induce him to change his initial choice.
Another class of theories holds that merely knowing one is different from others is unimportant. Shifts in choice occur because during discussion a member is exposed to persuasive arguments which prior to discussion were not available to him. This line of reasoning goes back to Thorndike \cite{Thorndike}, and is reconsidered next in \cite{Nordhoy, Jean, Stoner, Vinokur, Burnstein}.

In the present paper, we take for a basis of study the D-W  model and address two questions: what will be the master equation that governs the evolution of opinions when the consensus emerges as a result of an  exchange of arguments. Second, how this master equation will change if we relax the hypothesis of bounded confidence and, as a consequence, long range interactions are included.

In order to answer these questions, we consider a model in which, whenever two
agents interact, their opinions will change by a discrete amount. In
the continuous limit, for a small bounded confidence interval,  the distribution of agent's
opinions is governed by a Porous Media equation backward in time as in \cite{WD2000},
with a slightly different time scaling. Let us mention that this equation cannot be solved numerically
due to high amplifications of numerical errors, given that it is not well-posed\cite{Levine74}.

If we remove the bounded confidence hypothesis, we cannot longer take the continuous limit as
before, and a new equation appears. It is possible to solve the equation explicitly using a method developed by Li and
Toscani \cite{Toscani}, which permits to find the exact solutions of the continuous
approximation of the master equations that govern the evolution of the system, and then compare
them with numerical simulations. We get a non-local Porous Media equation (different from
the one obtained in \cite{Toscani}, which can be thought also as a first
order hyperbolic equation with a nonlinear non-local flux). We will show that the distribution
of agent's opinions converges to a Dirac's delta function concentrated at the mean opinion of
the initial distribution. Let us observe that this partial differential equation develops a shock
at the median of the distribution, and the median value moves toward the mean.

Let us stress  that in the model of Deffuant et al \cite{WD2000} the step of the dynamics is a weighted average of the agent's opinions, and thus,
it is always proportional to the difference between the two opinions. However, in the present case it is always of the same order, and
 each agent moves his opinion by a fixed amount towards the other agent's opinion. Although
 the  equations obtained are similar, we can further relax the bounded confidence hypothesis obtaining a better description of social
 processes.

The paper is organized as follows. In the first section we derive the analytical solutions for the bounded confidence dynamics, and we obtain
an equation similar to the one of Deffuant et al \cite{WD2000}.
In the second section, we show that it is possible to derive the master equation when the hypothesis of  bounded confidence is relaxed
and long range interactions are included. In the last section we discuss the results and conclude.

\section{Dynamics with Bounded Confidence}

The D-W model can be formulated in a way that the arguments exchange process is included. In order to fulfil this objective, we propose an agent-based model, where  $\sigma(i)$ ($-1 \le \sigma \le 1$) stands for the opinion of agent $i$ about a certain subject.  We assume that every time two agents interact, they increase or decrease their opinions by a fixed quantity $h$  if these differ by less than a fixed quantity $d$ (bounded confidence).  We also retain  the compromise hypothesis by which, if agents $i$ and $j$ interact, and $\sigma(i)<\sigma(j)$,

\begin{eqnarray}
\sigma^*(i) &=& \sigma(i)+h, \nonumber \\
\sigma^*(j) &=& \sigma(j)-h,
\label{EqDynMod1P}
\end{eqnarray}
 that is, agents tend to agree with one another.  In this way, the consensus dynamics is not instantaneous and could be understood as a discussion process in which agents become closer with time.

 In order to obtain the master equations of this model, let us subdivide $[-1,1]$ in $M$ intervals $\{I_j\}_{1\le j\le M}$, of length $h$, and define:
 \begin{equation}
  s(j,t) =\frac{\#\{i : \sigma(i,t)\in I_j\}}{N},
 \end{equation}
for $1\le j \le M$,
as the density of agents with opinion $\sigma$ in the intervals $I_j$. We can extend $s$ to $\Z$ by zero outside $[1,M]$.
Let  $d=kh$ be fixed, and then an agent in $I_j$ can interact with other agents located in the intervals $I_{j-k}$, $I_{j-k+1}$,... $I_{j+k}$.

Let us deduce the master equation for  the density $s$.
Fixing some characteristic time $\tau$ related to the rate of interactions, we have
$$s(j, t+\tau)= s(j,t) + \frac{2}{N} (G(j,t) - L(j,t))$$
where  $G(j,t)$ stands for a gain term and $L(j,t)$ for a loss term. In time $\tau$ only a pair of agents move,
and then the proportion of agents $s_j$ can increase or decrease by $1/N$.  The factor 2 appears since we can choose an
agent located at $I_j$ as the first or the second agent in the interaction. The gain term $G$ is
computed as the probability of an interaction between some agent located at $I_{j+1}$ (respectively, $I_{j-1}$)
at time $t$ and another agent located at $I_i$ for $i\in [j+1-k, j]$ (resp., $i\in [j, j-1+k])$. The loss term $L$  is
computed as the probability of an interaction between some agent located at $I_j$ and another agent within its range of interaction, the intervals
$I_i$ with
$i \in [j-k, j+k]$, except at $I_j$, since in this case there are no changes. Therefore,
\begin{align}
\frac{N}{2}\Big( s(j,t+\tau) - s(j,t)\Big)  =& (G(j,t) - L(j,t)   )  \nonumber \\
 = & \Big( s(j+1)\sum_{i=1}^k s(j+1-i) +s(j-1)\sum_{i=1}^k s(j-1+i) \Big)  \nonumber \\ & - \Big(s(j)\sum_{i=1}^k [s(j+i)+s(j-i)] \Big),
\label{EqDiscr1}
\end{align}
where, for notation brevity, we have omitted the variable $t$ in the right hand side. If we re-arrange the terms, it reads as

\begin{align}
\frac{\tau N}{2}\, \frac{ s(j,t+\tau) - s(j,t)}{\tau} = & [ s(j+1)-s(j)]\sum_{i=1}^k s(j+1-i)  \nonumber \\
& +
[s(j-1)-s(j)]\sum_{i=1}^k s(j-1+i)  \nonumber\\
&- s(j)s(j+k)
+2s(j)s(j) - s(j)s(j-k).
\label{EqDiscr2}
\end{align}

Similar equations hold for each $j\in \Z$, assuming that $s\equiv 0$ whenever $j\le 0$ or $j> M$.
The resulting system of equations is easier to study if we move to the continuous version.

To this end,  we introduce a smooth function
 $u(x,t)$ on $[0,1]$,
for $(x,t)\in[-1,1]\times[0,\infty)$ such that $s(j,t) = \int_{I_j}u(x,t)dx$. This means that
$u$ restricted to the interval $I_j$ behaves like $s(j,t)/h$.

 Let us observe that
\begin{align*}
[ s(j+1)-s(j)]\sum_{i=1}^k s(j+1-i) = & \frac{h}{h} [ s(j+1)-s(j)]\sum_{i=1}^k s(j+1-i)\\
=& h \left[ \frac{s(j+1)-s(j)}{h}\right]\sum_{i=1}^k s(j+1-i)  \\
  \approx  &
h \frac{\partial u}{\partial x}\int_{0}^d u(x+y,t)dy,
\end{align*}
and similar formulas hold for the other differences and sums, so for $\tau$ and $h$ small, the
 equation of the continuous model reads:
\begin{align}
\frac{\tau N}{2h}\frac{\partial u(x,t)}{dt} = &
\frac{\partial u}{\partial x}\int_{0}^d [u(x-y,t) -  u(x+y,t)]dy   -d^2 u(x,t) \frac{\partial^2 u}{\partial x^2}
\label{EqCont}
\end{align}

If we assume that $d$ is small, only local interactions are allowed and we can get rid of the integral by doing a second order Taylor approximation,
 i.e.,
 $$u(x+y)=u(x)+y \frac{\partial u(x)}{\partial x}+\frac{y^2}{2} \frac{\partial^2 u(x)}{\partial x^2},$$
 and hence  Eq. (\ref{EqCont}) becomes:
\begin{align} \frac{\tau N}{2h}\frac{\partial u(x,t)}{dt} = & - \frac{\partial u}{\partial x}\int_{0}^d  2y \frac{\partial u}{\partial x}dy -d^2 u  \frac{\partial^2 u}{\partial x^2} \\
= &
-d^2\Big(\frac{\partial u}{\partial x}\Big)^2  -d^2 u  \frac{\partial^2 u}{\partial x^2} \\
= & -d^2 \frac{\partial^2}{\partial x^2}\Big(\frac{u^2}{2}\Big).
\label{EqCont_BC}
\end{align}
This equation must be complemented with some initial distribution of opinions at $t=0$, say
$$u(x,0)= u_0(x).$$

This equation is the well known Porous Media equation, reversed in time, similar to the one
obtained in \cite{WD2000} for $\mu=0.5$. The only difference is the constants in the time rescaling, $\tau
= 2h d^2/N$ here, since  $d = k h$ gives $\tau = O(d^3)$ as   in \cite{WD2000}, and the explicit constant in the D-W model depends
on $\mu$.

In Fig.\ref{Fig1} we compare the dynamics of this model with the one of D-W model, for $N=10000$ agents, $h=0.01$, two values
of bounded confidence parameter $d=1$ and $d=0.5$ and $\mu=0.5$ (see \cite{WD2000} for set up details).
We can observe that, for this set of parameters, the dynamics of convergence of our model is slower than in the  D-W model.

\begin{figure}
\includegraphics[width=1.0\textwidth]{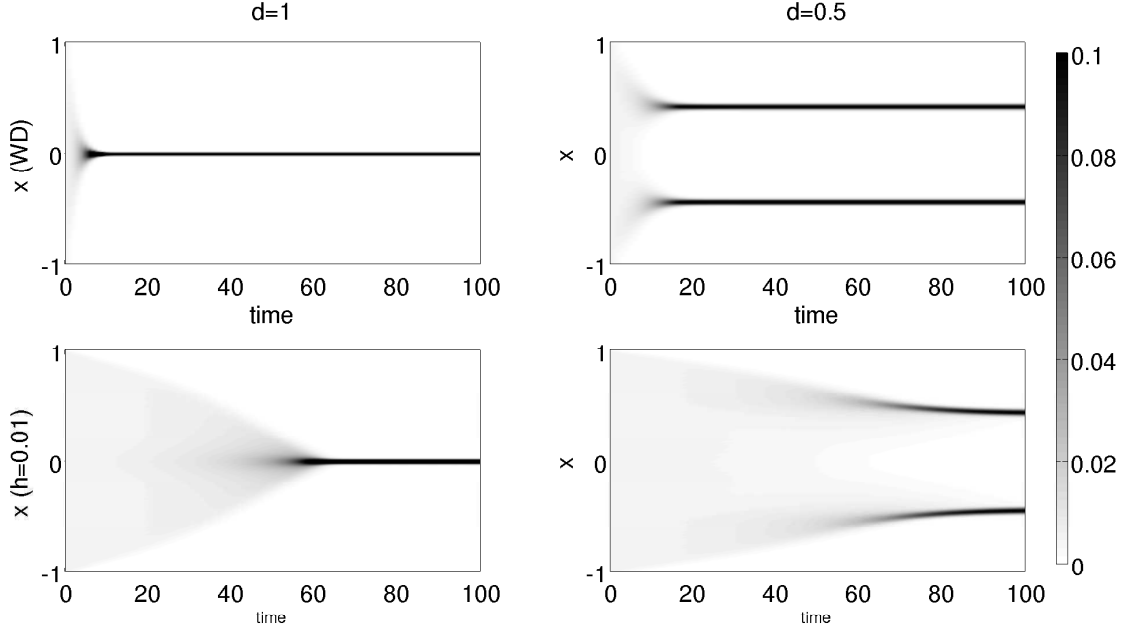}
\caption{ {\bf Comparative Dynamics with Bounded Confidence}. Time evolution of the D-W model, $\mu=0.5$ (Upper panels). The resulting dynamics of the system governed by Eq.\ref{EqDynMod1P}, for   $h=0.01$(Down panels). In both cases, simulations are done for a population of $N=10000$ agents and two values of $d\in\{1,0.5\}$.}
\label{Fig1}
\end{figure}

\section{Dynamics without Bounded Confidence}

The bounded confidence hypothesis, beside its sociological plausibility, is the key  to find a differential equation which describes the dynamics of the system \cite{WD2000}. However, the question that still remains to be answered is if it is possible to find an analytical solution to the equation of  the same model in the case where no bounded confidence hypothesis is invoked.  In this part of the paper we will address this issue.

If the bounded confidence hypothesis  is removed, every agent can interact with any other, independently of their differences in opinions. If we describe the dynamics in terms of density of agents with a given opinion, this means that the master equations turn to be non-local equations, the sums of Eq. (\ref{EqDiscr1}) are not truncated any more, and no longer Taylor approximation could be done.

Yet let start from the discrete equations again. In this case, they read:

\begin{align}
\frac{\tau N}{2}\frac{\partial s(j,t)}{\partial t} = &
 s(j+1,t) \sum_{i\le j} s(i,t)   +  s(j-1,t)  \sum_{i\ge j} s(i,t)  - s(j, t) \sum_{i\neq j} s(i,t) \nonumber  \\
= &
\Big(
 s(j+1,t) - s(j, t) \Big)\sum_{i\le j} s(i,t) - \nonumber  \\ &  -    \Big( s(j,t) - s(j-1,t) \Big) \sum_{i\ge j} s(i,t)  +2  s^2(j, t),
 \label{EqDiscr_sBC}
 \end{align}
for $1\le j \le M$.  The sum is in $\Z$ and, again, $s\equiv 0$ outside $[1,M]$. If we move these
equations to the continuous version in the same way as  before we obtain:
\begin{align*}
\frac{\tau N}{2h} \frac{\partial u(x,t)}{\partial t} = &
 \frac{\partial u(x,t)}{\partial x} \Big(
\int_{-\infty}^x u(y,t)dy     - \int_x^{\infty}  u(y,t)dy   \Big) +2  u^2(x, t) \nonumber \\ = &
\frac{\partial}{\partial x} \left(  u(x,t) \Big(
\int_{-\infty}^x u(y,t)dy     - \int_x^{\infty}  u(y,t)dy   \Big)\right),
 \end{align*}
 complemented with some initial distribution of opinions at $t=0$, say
$$u(x,0)= u_0(x).$$

We can see that this equation generalises the previous porous media equation, and it  is a
nonlocal first-order partial differential equation. From the mathematical point of view, this is
a nonlocal Porous Media Equation different from the ones obtained recently, see
\cite{CarVaz}, and it shares properties like mass preservation and finite time
propagation.

Finally, we re-scale times to get rid off the term $\tau N/2h$. We call $s=2ht/ N\tau$, and then
\begin{equation}
\frac{\partial u}{\partial s}=\frac{\partial u}{\partial t} \frac{dt}{ds}= \frac{\tau N}{2}\frac{\partial u}{\partial t},
\label{Tresc}
 \end{equation}
and then (renaming $s$ as $t$) we can consider the following equation
\begin{equation}
  \frac{\partial u(x,t)}{\partial t} =
\frac{\partial}{\partial x} \left(  u(x,t) \Big(
\int_{-\infty}^x u(y,t)dy     - \int_x^{\infty}  u(y,t)dy   \Big)\right).
 \label{EqCont_sBC}
 \end{equation}

\bigskip

Let us see if we can find a solution for this equation.

\subsection{Existence of solutions}

By a classical solution we understand $u\in C^{1,1}(\R\times(0,\infty))$ satisfying the
differential equation and
$$\lim_{(x,t)\to (x_0,0)} u(x,t)=u_0(x).$$
However, let us observe that we expect a measure $u$ as a solution, not necessarily
a differentiable function. So, we need to introduce a notion of a weak solution.

\begin{definition}
Given the following equation,
$$
\frac{\partial u(x,t)}{\partial t} = \frac{\partial }{\partial x} [u(x,t)G(t,x,F(x,t))],
$$
with initial condition $u(x,0)=u_0(x)$, where $F(x)= \int_{-\infty}^x u(y,t)dy$ is the cumulative
distribution function associated to the density $u$. We say that $u\in C^1((0,\infty),L^1)$ is a weak
solution if $u(x,0)=u_0(x)$, and
\begin{equation}\label{condition}
\frac{d}{dt} \int_{-\infty}^\infty h(x)u(x,t) dx =  -\int_{-\infty}^\infty h'(x) u(x,t)G(t,x,F(x,t)) dx
\end{equation}
for any $h\in C_0^1(\R)$.
\end{definition}

Condition  $u\in C^1((0,T),L^1(\R))$ means that, for each $t \in (0,T)$, the function $u( \cdot,
t)$ is an integrable function on $\R$, and this assignation is $C^1$ in the variable $t$.
Observe that weak solutions are not necessarily differentiable in the classic sense in the
variable $x$.

\bigskip
In what follows we are going to solve exactly Eq. (\ref{EqCont_sBC}) following  a method introduced in \cite{Toscani} in order to
deal with granular flows.
Giving $F(x)= \int_{-\infty}^x u(y,t)dy$, we will show that  we can re-write Eq. (\ref{EqCont_sBC}) as:
\begin{equation}
\frac{\partial u(x,t)}{\partial t} =  \frac{\partial}{\partial x} \Big(  u(x,t) [
2F(x)-1 ]  \Big).
 \label{EqContF_sBC}
\end{equation}

To this end, the method starts assigning a new variable for the cumulative function,
\begin{equation}
\rho = F(x) = \int_{-\infty}^x u(y,t)dy.
\end{equation}
When $u>0$ for any $x$ and $t$, we can introduce the inverse function $\X(\rho, t) =
F^{-1}(\rho,t)$. In other terms,
$$ \rho =
\int_{-\infty}^\X u(y,t)dy.$$
 However, since $u$ can be zero in some interval, it is convenient to  define
$$ \X(\rho,t) = \inf\{ x : F(x,t)\ge \rho\}.$$

With this change of variables, Eq. (\ref{EqContF_sBC}) becomes an infinite system of ordinary
differential equations,
\begin{equation}
 \frac{d \X(\rho, t)}{d t}  = 1-2\rho,
 \label{EqRho}
\end{equation}
one for each value of $\rho$, which can be solved explicitly as:
\begin{equation}
\X(\rho, t) = (1-2\rho)t + \X(\rho, 0),
\end{equation}
where $\X(\rho,0)$ is obtained from the initial datum, that is,
$$
\rho = \int_{-\infty}^{\X(\rho,0)} u_0(y)dy.
$$

We have obtained an implicit function for $\rho$, and  for each value of $t$, we can obtain it
in terms of $\X$, and since $\rho=F(\X)$, we recover the solution $u$ as
 $$
 u(x,t) = \partial_{\X} \rho(\X,t).
 $$

\bigskip
Let us prove the previous claims.  Let  us start with the following Lemma:

\begin{lema}\label{teoruno}
Let $h\in C^1_0(\R)$. Then
$$
\int_{-\infty}^\infty h(x) u(x,t)dx = \int_0^1 h(\X(\rho,t))d\rho.
$$
\end{lema}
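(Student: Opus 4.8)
The identity is a change-of-variables formula exchanging the physical variable $x$, weighted by the measure $u(x,t)\,dx = dF(\cdot,t)$, for the quantile variable $\rho = F(x,t)$, weighted by Lebesgue measure. The plan is to perform the substitution $\rho = F(x,t)$, under which formally $d\rho = u(x,t)\,dx$ and $x = \X(\rho,t)$; because $u(\cdot,t)$ has total mass $1$, the map $F(\cdot,t)$ carries $\R$ onto $(0,1)$, so the $\rho$-integral runs over $(0,1)$. When $u>0$ everywhere, $F(\cdot,t)$ is a strictly increasing $C^1$ diffeomorphism, $\X(\cdot,t)$ is its genuine inverse, and the substitution is immediate. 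The real work is to justify the same formula for the generalized inverse $\X(\rho,t) = \inf\{x : F(x,t)\ge\rho\}$, which is needed precisely because $u$ may vanish on intervals.

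First I would establish the fundamental equivalence
$$ \X(\rho,t)\le x \quad\Longleftrightarrow\quad \rho \le F(x,t), $$
valid for all $x\in\R$ and $\rho\in(0,1)$. The forward implication uses that $x$ then lies in the set defining the infimum; the reverse uses monotonicity of $F(\cdot,t)$ together with its right-continuity to pass from values of $F$ near $\X(\rho,t)$ to $F(x,t)$. This equivalence is the key step, since it encodes the whole correspondence between the flat pieces of $F$ (where $u=0$) and the jumps of $\X$.

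From the equivalence I would compute, for each fixed $x$,
$$ \mathrm{Leb}\{\rho\in(0,1) : \X(\rho,t)\le x\} = \mathrm{Leb}\{\rho\in(0,1) : \rho\le F(x,t)\} = F(x,t). $$
This says that the pushforward of Lebesgue measure on $(0,1)$ under $\X(\cdot,t)$ has cumulative distribution $F(\cdot,t)$, hence coincides with the measure $u(x,t)\,dx$. Applying the change-of-variables (pushforward) formula to the bounded continuous function $h\in C_0^1(\R)$ then yields $\int_0^1 h(\X(\rho,t))\,d\rho = \int_{-\infty}^\infty h(x)\,u(x,t)\,dx$, which is the claim.

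The main obstacle is the rigorous treatment of the degeneracies of the generalized inverse: intervals on which $u$ vanishes make $F$ constant and force $\X$ to jump, and these are exactly the points $\rho$ at which the naive relation $d\rho = u\,dx$ breaks down. The equivalence above sidesteps this, because the set of $\rho$ where $\X(\cdot,t)$ jumps is at most countable and hence Lebesgue-null, so it contributes nothing to the $\rho$-integral; all the bookkeeping is absorbed into the single clean computation of the pushforward. A subsidiary point to verify is that the compact support of $h$ guarantees finiteness of every integral and legitimizes the manipulations.
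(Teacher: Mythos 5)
Your proof is correct, but it follows a genuinely different (and more careful) route than the paper. The paper's own proof is a one-line formal substitution: set $x=\X$ and write $dx = \partial_\rho \X \, d\rho = u(x,t)^{-1}\,d\rho$, which is exactly the manipulation that breaks down where $u$ vanishes — the case the authors themselves flag as the reason for introducing the generalized inverse $\X(\rho,t)=\inf\{x: F(x,t)\ge\rho\}$ in the first place. You instead prove the measure-theoretic statement underlying the substitution: the equivalence $\X(\rho,t)\le x \Leftrightarrow \rho\le F(x,t)$ (your two implications are both correctly justified, the nontrivial one via monotonicity and right-continuity of $F$, which here is even continuous since $u(\cdot,t)\in L^1$), from which the pushforward of Lebesgue measure on $(0,1)$ under $\X(\cdot,t)$ is identified as the measure $u(x,t)\,dx$, and the lemma follows from the abstract change-of-variables formula for pushforwards applied to the bounded continuous $h$. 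What your approach buys is a proof valid without the hypothesis $u>0$, i.e.\ precisely in the degenerate regime the paper needs; what the paper's approach buys is brevity, at the cost of being only formal. One cosmetic remark: your aside that the jump set of $\X(\cdot,t)$ is countable is not actually needed once the pushforward identity is in hand, since that identity already accounts for all degeneracies.
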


\begin{proof}
Just change $x=\X$, and formally
$$ dx = \frac{\partial \X}{\partial \rho} d\rho =  \frac{\partial F^{-1}}{\partial \rho} d\rho= \frac{1}{u(x,t)}d\rho.$$
Hence,
$$
\int_{-\infty}^\infty h(x) u(x,t)dx = \int_0^1 h(\X(\rho,t))d\rho,
$$
and the proof is finished.
\end{proof}

We are ready to prove the main result:

\begin{theorem} Let $\X(\rho, t)$ be a solution of
\begin{equation}\label{eqteor2}\left\{\begin{array}{rclll}
\partial_t \X(\rho, t) & = & 1-2\rho, &\quad &  t\in(0,T),
\\
\X(\rho, 0) &= & \inf\left\{ x : \int_{-\infty}^x u_0(y)dy \ge \rho\right\}.
\end{array}\right.
\end{equation}
Then  there exists a weak solution of
 $$\left\{ \begin{array}{rclll}
\frac{\partial}{ \partial t} u(x,t) & = & \frac{\partial}{ \partial x} \left[u(x,t) (2F(x) - 1) \right]
 &  \quad & (x,t) \in \R\times (0,T) \\
 u(x,0) &=& u_0(x) & & x\in \R
\end{array}\right.
 $$
 with $\int_\R u(x,t)dx = 1$ for  $0<t<T$.
\end{theorem}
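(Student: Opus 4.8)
The plan is to build the weak solution explicitly out of the quantile map $\X(\rho,t)$ and then verify the identity (\ref{condition}) of Definition~1 (with $G(t,x,F)=2F-1$) by means of the change of variables recorded in Lemma~\ref{teoruno}. First I would \emph{define} $u(\cdot,t)$ to be the push-forward of Lebesgue measure on $[0,1]$ under $\rho\mapsto\X(\rho,t)$; that is, I declare $\int_{\R} h(x)\,u(x,t)\,dx := \int_0^1 h(\X(\rho,t))\,d\rho$ for every $h\in C_0^1(\R)$, which is exactly the content of Lemma~\ref{teoruno}. Because $\X(\rho,t)=(1-2\rho)t+\X(\rho,0)$ is the explicit solution of (\ref{eqteor2}), this prescription is completely explicit. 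Choosing $h$ to approximate the constant function $1$ gives $\int_{\R} u(x,t)\,dx=\int_0^1 d\rho=1$, the asserted conservation of mass.

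Next I would check that the construction really produces a probability density and recover the link between $\X$ and $F$. The object $u(\cdot,t)$ is a bona fide (nonnegative) measure with quantile $\X(\cdot,t)$ precisely as long as $\rho\mapsto\X(\rho,t)$ is nondecreasing. Since $\partial_\rho\X(\rho,t)=\partial_\rho\X(\rho,0)-2t$ and $\X(\cdot,0)$ is the nondecreasing quantile of $u_0$, monotonicity is preserved up to the first time $T$ at which the slope $2t$ overtakes the flattest part of $\X(\cdot,0)$ (equivalently, the region where $u_0$ is largest); this is exactly the shock time and explains the restriction $t\in(0,T)$. For $t<T$ the map $\X(\cdot,t)$ is invertible with inverse the cumulative distribution $F(\cdot,t)$ of $u(\cdot,t)$, so that $F(\X(\rho,t),t)=\rho$.

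With $u$ so defined, I would verify (\ref{condition}) by computing both sides. For the left-hand side, Lemma~\ref{teoruno} together with differentiation under the integral sign (legitimate since $h$ has compact support and $\X$ is smooth in $t$) gives $\frac{d}{dt}\int_{\R}h\,u\,dx=\frac{d}{dt}\int_0^1 h(\X(\rho,t))\,d\rho=\int_0^1 h'(\X(\rho,t))\,(1-2\rho)\,d\rho$, where I used $\partial_t\X=1-2\rho$. For the right-hand side I would apply the same substitution $x=\X(\rho,t)$ to the integrand $h'(x)(2F(x,t)-1)u(x,t)$; invoking $F(\X(\rho,t),t)=\rho$ turns it into $-\int_0^1 h'(\X(\rho,t))\,(2\rho-1)\,d\rho$, which is identical to the left-hand side. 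Finally, at $t=0$ the push-forward is by construction the law whose quantile is $\X(\cdot,0)$, namely $u_0$, and continuity in $t$ of $t\mapsto\int h\,u\,dx$ secures the initial condition.

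I expect the main obstacle to be the rigorous justification of the change of variables when it is applied to the integrand $h'(x)(2F(x,t)-1)$ on the right-hand side: Lemma~\ref{teoruno} is stated for $h\in C_0^1$, whereas $2F-1$ is merely bounded and monotone, so I would first extend the push-forward identity to bounded Borel integrands and then deal carefully with the flat parts of $F$ (where $u=0$) and with possible atoms, since there $F(\X(\rho,t),t)=\rho$ holds only for almost every $\rho$. The companion difficulties are justifying differentiation under the integral sign and confirming that the constructed object $u(\cdot,t)$ genuinely lies in $L^1(\R)$ rather than only in the space of measures for $t<T$; both ultimately rest on the strict monotonicity of $\X(\cdot,t)$ established in the second step.
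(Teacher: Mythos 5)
Your proposal is correct and follows essentially the same route as the paper: verify the weak formulation (\ref{condition}) by the change of variables $x=\X(\rho,t)$ from Lemma~\ref{teoruno}, using $F(\X(\rho,t),t)=\rho$ and $\partial_t\X=1-2\rho$ to make both sides equal. Your additional care in explicitly defining $u(\cdot,t)$ as the push-forward of Lebesgue measure and in flagging the monotonicity of $\X(\cdot,t)$ is a welcome tightening of a step the paper leaves implicit, but it is not a different argument.
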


\begin{proof}
Take any function $h\in C^1_0(\R)$, and then, using Lemma \ref{teoruno}
\begin{align*}
\frac{d}{dt} \int_{-\infty}^\infty h(x)u(x,t) dx =& \frac{d}{dt} \int_0^1 h(\X(\rho,t)) d\rho \\
=& \int_0^1 h'(\X(\rho,t)) \frac{d \X(\rho, t)}{d t} d\rho.
\end{align*}
On the other hand,
$$ -\int_{-\infty}^\infty h'(x) u(x,t)(2F(x) - 1) dx=
- \int_0^1 h'(\X(\rho,t)) (2F(x) - 1)d\rho. $$

We get that condition (\ref{condition}) is satisfied if
$$\int_0^1 h'(\X(\rho,t)) \frac{d \X(\rho, t)}{d t} d\rho =
- \int_0^1 h'(\X(\rho,t)) (2\rho- 1) d\rho, $$
 or, equivalently,
$$\int_0^1 h'(\X(\rho,t))\Big[ \frac{d \X(\rho, t)}{d t} + 2\rho - 1\Big] d\rho = 0, $$
which trivially holds   if $\X(\rho, t)$ is a solution of Eq. (\ref{eqteor2}).

The proof is finished.
\end{proof}

With an extra effort, it could be proved also that the existence of a weak solution $u(x,t)$
implies the existence of solution $\X(\rho, t)$. Here, the simplicity of Eq. (\ref{eqteor2})
makes unnecessary such equivalence.

\bigskip

\subsection{Convergence to the mean}

Observe that
$$\partial_t \X(\rho, t)=1-2\rho$$ is positive   for $\rho<1/2$, and negative for $\rho>1/2$.
Therefore, since $\rho=1/2$ gives  the median of the distribution, we get that
$$
    \X(\rho, t) = \inf\left\{ x : \int_{-\infty}^x v(y,t)dy =\rho\right\}
  $$
 strictly  increases for $0<\rho<1/2$, and decreases for $1/2<\rho<1$. Hence, there exists some $c_0$ such that
$$
\lim_{t\to T} \X(\rho, t) = c_0,
 $$
which implies that $u(x,t)\to \delta_{c_0}$ as $t\to T$. In other words, the distribution of
opinions concentrates on the mean value of the distribution and the population  reaches consensus.

\subsection{Examples with different initial conditions}

Let's solve explicitly the theoretical asymptotic values of $u(x,t)$ for two different initial conditions: a symmetrical one ($u(x,0)=cte$) and an asymmetrical ($u(x,0)=2x$). We  compare them with computer simulations in both cases. We will see that computer simulations give a strong agreement with these specific examples.

\subsubsection{a.- Constant opinion's initial distribution}

Let $u_0(x)= \tfrac{1}{2} \chi_{[-1,1]}(x)$, where $\chi_{[a,b]}(x)$ is equal to one if $x\in [a,b]$ and zero outside. Then,
$$
\X(\rho, 0) = \inf\left\{ x : \int_{-1}^x dy  \ge \rho\right\}=2\rho-1.
$$

So,
$$
\X(\rho, t) = (1-2\rho)t + \X(\rho, 0) = (1-2 \rho)t+2\rho-1,
$$
 and inverting, since $0\le \rho\le 1$,

$$
\rho =\left( \frac{x-t+1}{2-2t}\right) \chi_{[t-1,1-t]}(x)
$$

Finally, for $0\le t<1$, since $u\ge 0$,
$$
u(x,t)= \partial_x \left( \frac{x-t+1}{2-2t}\chi_{[t-1,1-t]}(x)\right) = (2-2t)^{-1}\chi_{[t-1,1-t]}(x).
$$

Observe that the solution blows up when $t$ reaches $1$,
and
$$ \lim_{t\to 1} u(x,t) = \delta_{0}.$$

and
$$
\int_{-1}^1 x u(x,t)dx=0,
$$

\subsubsection{b.- Linear opinion's initial distribution}

  Let $u(x,0)= \frac{x+1}{2} \chi_{[-1,1]}(x)$. Then,

\begin{align*}
F(x,0) = & \int_{-\infty}^x \frac{y+1}{2} \chi_{[-1,1]}(y) dy \\
= & \int_{-1}^x  \frac{y+1}{2} \chi_{[-1,1]}(y) dy \\
 = & \frac{(x+1)^2}{4}\chi_{[-1,1]}(x).
\end{align*}

 Hence, $ \X(\rho,t)= F^{-1}(\rho)$  and we have
 $$
 \X(\rho,0)=(2\sqrt{\rho}-1)\chi_{[0,1]}(\rho).
 $$

 A direct computation gives
$$
\X(\rho,t)=(1-2\rho)t+(2\sqrt{\rho}-1), \qquad 0\le \rho \le 1,
$$
and we recover $u(x,t)=\partial_x \X^{-1}(x)$ from
$$ x=(1-2\rho)t+2\sqrt{\rho}-1, \qquad 0\le \rho \le 1,
$$
by computing the inverse function as before.

Finally, observe that, for any $t$,
$$
\int_{-1}^1 x u(x,t)dx=
\int_0^1 \X(\rho,t)d\rho=\frac{1}{3},
$$
and so $u(x,t)\to \delta_{1/3}$.


In Fig.\ref{Fig2} we can observe the dynamics of $u(x,t)$ obtained by simulating $N=10000$ agents, for $h=0.01$ for the two mentioned initial conditions $u(x,0)$ (uniform and linear), and in Fig.(\ref{Fig3}) the dynamics of the medians. 
We have to take into account that the relation between the re-scaled time used to derive the analytical solution and the time used in simulations differ by a factor $(2h)/(N\tau)$, 
with $\tau=1/N$. This factor makes that the theoretical re-scaled time corresponds to the time used in simulations divided by $50$.
We can observe how the distributions and the median converge to the predicted theoretical  values $0$  and $1/3$, at  times predicted in theory ($t=50$, rescaled is $T=1$) showing the perfect agreement between theory and simulations.

\begin{figure}
\includegraphics[width=0.8\textwidth]{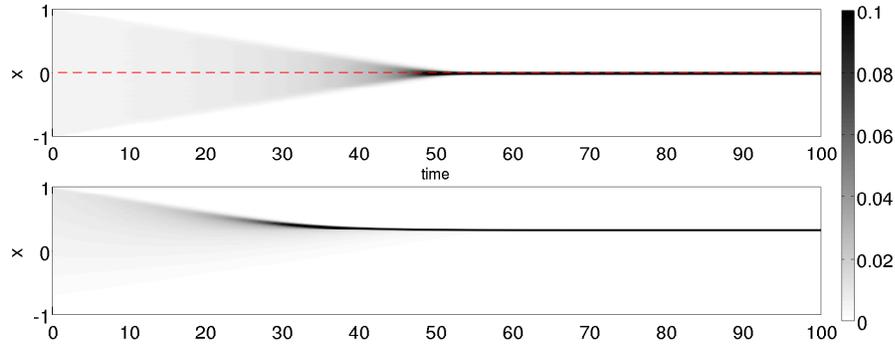}
\caption{ {\bf Comparative Dynamics without Bounded Confidence}. Time evolution of an agent's based model governed by Eq.\ref{EqDynMod1P}, for a population of $N=10000$ agents and $h=0.01$, and two initial distributions of opinions: Uniform (Upper panel) and Linear (Down panel).}
\label{Fig2}
\end{figure}

\begin{figure}
\includegraphics[width=0.8\textwidth]{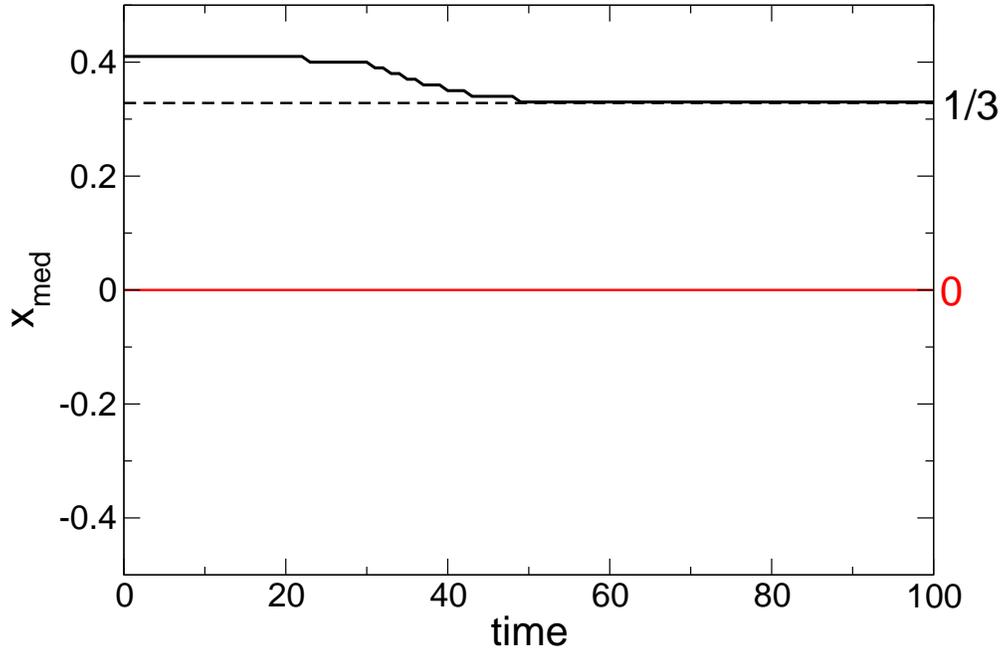}
\caption{ {\bf Median and average value dynamics with Bounded Confidence}. Time evolution of the median and the average value of the distribution  for a population of $N=10000$ agents and $h=0.01$, and two initial distributions of opinions: Uniform (Upper panel) and Linear (Down panel).}
\label{Fig3}
\end{figure}

 \section{Conclusions}

We  presented a model of opinion formation, based on the bounded confidence model  (D-W) presented in \cite{WD2000}, 
 in which  consensus emerges as a  results of a collective discussion process.
We derived analytical solutions to a non-local master equation both for the case of bounded
confidence dynamics, as it was done in the original D-W model, as  for the case of long
range interactions which has not been done before. 

In the first case, for a small bounded
confidence, the distribution of opinions is ruled by a Porous Media equation, reversed in
time, similar to the one obtained in \cite{WD2000}, with the only difference in the time rescaling.

 In the second case, without bounded confidence, the distribution of opinions converges to Dirac's delta function
concentrated at the mean value of the distribution, and the population always reaches
consensus. The simulations performed with an agent based model show an excellent agreement with theoretical results.

\end{document}